\documentclass[smallextended]{svjour3}%
\pdfoutput=1
\usepackage{amsfonts}
\usepackage{amsmath}
\usepackage{amssymb}
\usepackage{graphicx}
\usepackage{fullpage}%
\setcounter{MaxMatrixCols}{30}
\providecommand{\U}[1]{\protect\rule{.1in}{.1in}}
\pdfoutput=1
\smartqed
\journalname{Foundations of Physics}
\begin{document}

\title{Perfect state distinguishability and computational speedups with postselected
closed timelike curves}
\author{Todd A. Brun
\and Mark M. Wilde}
\institute{Todd Brun is a professor with the Center for Quantum Information Science and Technology and the
Communication Sciences Institute of the Ming Hsieh Department of Electrical
Engineering at the University of Southern California, Los Angeles, California
90089 USA. Mark M. Wilde is
a postdoctoral fellow in the School of Computer Science, McGill University, Montreal, Quebec, Canada (E-mail: tbrun@usc.edu; mark.wilde@mcgill.ca).}
\date{Received: \today / Accepted: }
\maketitle

\begin{abstract}
Bennett and Schumacher's postselected quantum teleportation is a model of
closed timelike curves (CTCs) that leads to results physically different from
Deutsch's model. We show that even a single qubit passing through a
postselected CTC (P-CTC) is sufficient to do any postselected quantum
measurement with certainty, and we discuss an important difference between \textquotedblleft
Deutschian\textquotedblright\ CTCs (D-CTCs) and P-CTCs in which the future
existence of a P-CTC might affect the present outcome of an experiment. Then,
based on a suggestion of Bennett and Smith, we explicitly show how a party
assisted by P-CTCs can distinguish a set of linearly independent quantum
states, and we prove that it is not possible for such a party to distinguish a
set of linearly dependent states. The power of P-CTCs is thus weaker than that
of D-CTCs\ because the Holevo bound still applies to circuits using them,
regardless of their ability to conspire in violating the uncertainty
principle. We then discuss how different notions of a quantum mixture that are
indistinguishable in linear quantum mechanics lead to dramatically differing
conclusions in a nonlinear quantum mechanics involving P-CTCs. Finally, we
give explicit circuit constructions that can efficiently factor integers,
efficiently solve any decision problem in the intersection of NP and coNP, and
probabilistically solve any decision problem in NP. These circuits accomplish
these tasks with just one qubit traveling back in time, and they exploit the
ability of postselected closed timelike curves to create grandfather paradoxes
for invalid answers. \PACS{03.65.Wj \and 03.67.Dd \and 03.67.Hk \and 04.20.Gz}

\end{abstract}

\section{Introduction}

Einstein's field equations for general relativity predict the existence of
closed timelike curves (CTCs) in certain exotic spacetime geometries
\cite{RevModPhys.21.447,B80,PhysRevLett.66.1126}, but the bizarre consequences
lead many physicists to doubt that such \textquotedblleft time
machines\textquotedblright\ could exist. Closed timelike curves, if they
existed, would allow particles to interact with their former selves,
suggesting the possibility of grandfather-like paradoxes in both classical and
quantum theories. Physicists have considered the ramifications of closed
timelike curves for quantum mechanics by employing path-integral approaches in
an effort to avoid contradictions~\cite{PhysRevLett.61.1446,PhysRevD.49.6543}.

Deutsch showed that closed timelike curves also have consequences for
classical and quantum computation~\cite{PhysRevD.44.3197}, and he suggested
imposing a self-consistency condition on the density matrix\ of a CTC\ qubit
in order to avoid grandfather-like paradoxes. Since Deutsch's seminal work,
quantum information theorists have produced a flurry of results under his
model. They have shown that \textquotedblleft Deutschian\textquotedblright%
\ closed timelike curves (D-CTCs) can help solve NP-complete problems
\cite{PhysRevA.70.032309}, that a D-CTC-assisted classical or quantum computer
has computational power equivalent to that of PSPACE
\cite{ScottAaronson02082009}, that a D-CTC-assisted quantum computer can
perfectly distinguish an arbitrary set of non-orthogonal states
\cite{PhysRevLett.102.210402}, that evolutions of chronology-respecting qubits
can be a discontinuous function of the initial state \cite{DFI10}, and that it
is not possible to purify mixed states of qubits that traverse a D-CTC while
still being consistent with interactions with chronology-respecting
qubits~\cite{PCA10}. The result of Brun \textit{et al}%
.~\cite{PhysRevLett.102.210402} concerning state distinguishability is perhaps
the most striking for any firm believers in unitarity, considering that a
D-CTC-assisted quantum computer can violate both the uncertainty principle and
the Holevo bound~\cite{holevo}.

Since these findings, Bennett \textit{et al}.~\cite{PhysRevLett.103.170502}
questioned the above results of Aaronson and Watrous and Brun \textit{et
al}.~on D-CTC-assisted computation and distinguishability, respectively. They showed that
the circuits of Aaronson \textit{et al}.~do not operate as
advertised when acting on a classically-labeled mixture of
states and argued that this implies their circuits \textquotedblleft become
impotent\textquotedblright\ \cite{BLSS10}. In their work, they exploited \textit{linear} mixtures
of states to suggest that the aforementioned authors fell into a
\textquotedblleft linearity trap.\textquotedblright\ But recent papers cast
doubt on the claims of Bennett \textit{et al}.~and come to the same conclusion
as Aaronson and Watrous and Brun \textit{et al}.~\cite{RM10,CM10}---a first
paper tracks the information flow of quantum systems in a D-CTC\ with a
Heisenberg-picture approach \cite{RM10}, and another paper shows how a density
matrix description is not valid in a nonlinear theory \cite{CM10}. Further
work revisits Deutsch's self-consistency conditions \cite{WB10}, showing that
they are concealing paradoxes from an observer rather than eliminating them as
they should. These dramatically differing conclusions have to do with the
ontological status of quantum states, which, for the most part, is not a major
concern in standard linear quantum mechanics, but clearly leads to differing
results in a nonlinear quantum mechanics.

Recently, a different model of closed timelike curves has
emerged~\cite{S09,Lloyd2010,LMGGS10}, based on Bennett and Schumacher's
well-known but unpublished work on postselected quantum
teleportation~\cite{B05}. This alternative theory features a postselected
closed timelike curve (P-CTC), which is physically inequivalent to a D-CTC
\cite{Lloyd2010,LMGGS10}. Sending a qubit into the past by a P-CTC is somewhat like
teleporting the qubit's state \cite{BBCJPW93}. Normally, states can only be
teleported forward in time, because the receiver requires a measurement
outcome from the sender in order to recover the state. By somehow postselecting with certainty on
only a single measurement outcome, however, this requirement is removed.
Postselection of quantum teleportation in this fashion implies that an
entangled state effectively creates a noiseless quantum channel into the past.
P-CTCs have the benefit of preserving correlations with external systems,
while also being consistent with path-integral formulations of
CTCs~\cite{Lloyd2010,LMGGS10,PhysRevD.49.6543}. Lloyd \textit{et al}.~have
proven that the computational power of P-CTCs is equivalent to that of the
complexity class PP~\cite{Lloyd2010,LMGGS10}, by invoking Aaronson's results
concerning the power of quantum computation with postselection~\cite{A05}. In
this paper, we show that the same result can be derived from a different
direction: by invoking the ideas of Ref.~\cite{fpl2003brun} to eliminate
invalid answers to a decision problem by making them \textquotedblleft
paradoxical.\textquotedblright\ One can exploit this particular aspect of
P-CTCs to give explicit constructions of P-CTC-assisted circuits with dramatic
computational speedups.

Our first result is to show that one can postselect with certainty the outcome of any
generalized measurement using just one P-CTC\ qubit. Lloyd \textit{et
al}.~state that it is possible to perform any desired postselected quantum
computation with certainty with the help of a P-CTC\ system~\cite{Lloyd2010,LMGGS10}, but
they did not explicitly state that it requires just one P-CTC\ qubit. Next, we
discuss a difference between D-CTCs and P-CTCs, in which the existence of a
future P-CTC\ might affect the outcome of a present experiment. This
observation might potentially lead to a way that one could test for a future
P-CTC, by noticing deviations from expected probabilities in quantum
mechanical experiments.

Further results concern state distinguishability with P-CTC-assisted circuits.
We begin by showing that the SWAP-and-controlled-Hadamard circuit from
Ref.~\cite{PhysRevLett.102.210402}\ can perfectly distinguish $\left\vert
1\right\rangle $ and $\left\vert +\right\rangle $ when assisted by a P-CTC
(recall that this circuit can distinguish $\left\vert 0\right\rangle $ and
$\left\vert -\right\rangle $ when assisted by a D-CTC
\cite{PhysRevLett.102.210402}). We show that the circuit from
Ref.~\cite{PhysRevLett.102.210402}\ for distinguishing the BB84 states
$\left\vert 0\right\rangle $, $\left\vert 1\right\rangle $, $\left\vert
+\right\rangle $, and $\left\vert -\right\rangle $ when assisted by a D-CTC
cannot do so when assisted by a P-CTC. The proof of
Theorem~\ref{thm:P-CTC-state-distinguish}\ then constructs a P-CTC-assisted
circuit, similar to the general construction from
Ref.~\cite{PhysRevLett.102.210402}, that can perfectly distinguish an
arbitrary set of linearly independent states. The proof offers an alternate
construction that accomplishes this task with just one P-CTC\ qubit, by
exploiting the generalized measurement of Ref.~\cite{C98}\ and the ability of
a P-CTC\ to postselect with certainty on particular measurement outcomes. The theorem also
states that no P-CTC-assisted circuit can perfectly distinguish a set of
linearly dependent states. Bennett and Smith both suggested in private
communication~\cite{B09,S10} that such a theorem should hold true. The theorem
implies that a P-CTC-assisted circuit cannot beat the Holevo bound, so that
their power is much weaker than that of a D-CTC-assisted circuit for this task
\cite{PhysRevLett.102.210402}. We then discuss how different representations
of a quantum state in P-CTC-assisted circuits lead to dramatically differing
conclusions, even though they give the same results in linear quantum mechanics.

Our final set of results concerns the use of P-CTC-assisted circuits in
certain computational tasks. We first show that a P-CTC-assisted circuit can
efficiently factor integers without the use of the quantum Fourier transform.
We then generalize this result to a P-CTC-assisted circuit that can
efficiently solve any decision problem in the intersection of NP and co-NP.
Our final construction is a P-CTC-assisted circuit for probabilistically
solving any problem in NP. All of our circuits can accomplish these
computational tasks using just one P-CTC\ qubit. These circuits exploit the
idea in Ref.~\cite{fpl2003brun} of making invalid answers paradoxical, which
yields results that are surprisingly similar to Aaronson's construction in
Ref.~\cite{A05} concerning the power of postselected quantum computation.

We structure this paper as follows. The next section briefly reviews the P-CTC
model, and we prove that a single qubit in a P-CTC allows postselecting on any
measurement outcome with certainty. We also discuss an important difference between D-CTCs and P-CTCs
and provide an example to illustrate this difference.
Section~\ref{sec:P-CTC-distinguish}\ presents our results for P-CTCs and state
distinguishability, and Section~\ref{sec:P-CTC-compute}\ presents our results
for P-CTCs in certain computational tasks. We end by summarizing our results.

\section{The Theory of Postselected Closed Timelike Curves}

\label{sec:review}We first briefly review the theory of
P-CTCs~\cite{B05,S09,Lloyd2010,LMGGS10}. A P-CTC-assisted circuit operates by
combining a chronology respecting qubit in a state $\rho$ with a chronology-violating qubit and interacting
them with a unitary evolution. After the unitary, the chronology-respecting qubit proceeds forward in time
while the chronology-violating qubit goes back in time. The assumption of the model is that this evolution
is mathematically equivalent to combining the state $\rho$ with
a maximally
entangled Bell state $\left\vert \Phi\right\rangle $ where%
\[
\left\vert \Phi\right\rangle \equiv\frac{1}{\sqrt{d}}\sum_{i=0}^{d-1}%
\left\vert i\right\rangle \left\vert i\right\rangle .
\]
There is then a unitary interaction $U$\ between the CR qubit and half of the
entangled state. The final step is to project the two systems of the entangled
state onto the state $\left\vert \Phi\right\rangle $, renormalize the state,
and trace out the last two systems. The renormalization induces a nonlinearity
in the evolution.
This approach is the same as the controversial
\textquotedblleft final state projection\textquotedblright\ method from the
theory of black hole evaporation~\cite{HM04,PhysRevLett.96.061302}.
Figure~\ref{fig:PCTC-operation}\ depicts the operation of a P-CTC.

\begin{figure}[ptb]
\begin{center}
\includegraphics[
natheight=1.326600in,
natwidth=2.700000in,
height=0.7671in,
width=1.5342in
]{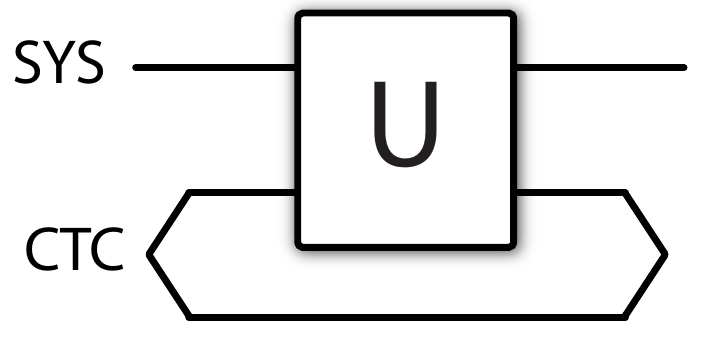}
\end{center}
\caption{The interpretation of the above circuit
is that a chronology-respecting qubit named SYS and a chronology-violating qubit CTC interact
according to a unitary evolution. After the unitary interaction,
the chronology-respecting qubit proceeds forward in time,
while the chronology-violating qubit goes back in time, forming a closed loop.
The assumption of the postselected closed timelike curve (P-CTC) model is that
the above sequence of events is mathematically
equivalent to a ``postselected teleportation protocol.'' 
In such a protocol, the
circuit begins with a chronology-respecting qubit named SYS\ and a maximally
entangled state $\left\vert \Phi\right\rangle $. We then feed the SYS\ system
and half of the maximally entangled state into a unitary $U$. The final step
is to project the two systems of the maximally entangled state onto the state
$\left\vert \Phi\right\rangle $ and renormalize. The assumption of the P-CTC model
is that this final projection is certain and not probabilistic and
that this procedure is mathematically equivalent to
creating a \textquotedblleft quantum channel into the past\textquotedblright%
\ which the CTC\ qubit can exploit to go back in time.}%
\label{fig:PCTC-operation}%
\end{figure}

As pointed out by Lloyd \textit{et al}.~\cite{LMGGS10},\ the action of a
unitary $U_{\text{SYS},A}$\ on a joint system consisting of a
chronology-respecting pure state $\left\vert \psi\right\rangle _{\text{SYS}}$
and a CTC\ system$~A$ is as follows (before renormalization):%
\begin{align*}
\left\langle \Phi\right\vert _{AB}U_{\text{SYS},A}\left(  \left\vert
\psi\right\rangle _{\text{SYS}}\otimes\left\vert \Phi\right\rangle
_{AB}\right)    & =\frac{1}{d}\sum_{i,j}\left\langle i\right\vert
_{A}\left\langle i\right\vert _{B}U_{\text{SYS},A}\left\vert \psi\right\rangle
_{\text{SYS}}\left\vert j\right\rangle _{A}\left\vert j\right\rangle _{B}\\
& =\frac{1}{d}\sum_{i,j}\left\langle i\right\vert _{A}U_{\text{SYS}%
,A}\left\vert j\right\rangle _{A}\left\vert \psi\right\rangle _{\text{SYS}%
}\langle i\left\vert j\right\rangle _{B}\\
& =\frac{1}{d}\sum_{i}\left\langle i\right\vert _{A}U_{\text{SYS},A}\left\vert
i\right\rangle _{A}\left\vert \psi\right\rangle _{\text{SYS}}\\
& =\frac{1}{d}\text{Tr}_{A}\left\{  U_{\text{SYS},A}\right\}  \left\vert
\psi\right\rangle _{\text{SYS}},
\end{align*}
implying the following evolution for a mixed state $\rho_{\text{SYS}}$ (before
renormalization):%
\begin{align*}
&  \left\langle \Phi\right\vert _{AB}U_{\text{SYS},A}\left(  \rho_{\text{SYS}%
}\otimes\left\vert \Phi\right\rangle \left\langle \Phi\right\vert
_{AB}\right)  U_{\text{SYS},A}^{\dag}\left\vert \Phi\right\rangle _{AB}\\
&  =\left\langle \Phi\right\vert _{AB}U_{\text{SYS},A}\left(  \sum_{i}%
\lambda_{i}\left\vert \psi_{i}\right\rangle \left\langle \psi_{i}\right\vert
_{\text{SYS}}\otimes\left\vert \Phi\right\rangle \left\langle \Phi\right\vert
_{AB}\right)  U_{\text{SYS},A}^{\dag}\left\vert \Phi\right\rangle _{AB}\\
&  =\sum_{i}\lambda_{i}\left\langle \Phi\right\vert _{AB}U_{\text{SYS}%
,A}\left(  \left\vert \psi_{i}\right\rangle \left\langle \psi_{i}\right\vert
_{\text{SYS}}\otimes\left\vert \Phi\right\rangle \left\langle \Phi\right\vert
_{AB}\right)  U_{\text{SYS},A}^{\dag}\left\vert \Phi\right\rangle _{AB}\\
&  =\sum_{i}\lambda_{i}\left\langle \Phi\right\vert _{AB}U_{\text{SYS}%
,A}\left(  \left\vert \psi_{i}\right\rangle _{\text{SYS}}\left\vert
\Phi\right\rangle _{AB}\otimes\left\langle \psi_{i}\right\vert _{\text{SYS}%
}\left\langle \Phi\right\vert _{AB}\right)  U_{\text{SYS},A}^{\dag}\left\vert
\Phi\right\rangle _{AB}\\
&  =\frac{1}{d^{2}}\sum_{i}\lambda_{i}\text{Tr}_{A}\left\{  U_{\text{SYS}%
,A}\right\}  \left\vert \psi_{i}\right\rangle _{\text{SYS}}\left\langle
\psi_{i}\right\vert _{\text{SYS}}\text{Tr}_{A}\left\{  U_{\text{SYS},A}^{\dag
}\right\}  \\
&  =\frac{1}{d^{2}}\text{Tr}_{A}\left\{  U_{\text{SYS},A}\right\}  \sum
_{i}\lambda_{i}\left\vert \psi_{i}\right\rangle _{\text{SYS}}\left\langle
\psi_{i}\right\vert _{\text{SYS}}\text{Tr}_{A}\left\{  U_{\text{SYS},A}^{\dag
}\right\}  \\
&  =\frac{1}{d^{2}}\text{Tr}_{A}\left\{  U_{\text{SYS},A}\right\}
\rho_{\text{SYS}}\text{Tr}_{A}\left\{  U_{\text{SYS},A}^{\dag}\right\}  ,
\end{align*}
where the fourth equality follows from the above development for pure states.
Thus, the induced map on the chronology-respecting state is as follows (after
renormalization):%
\begin{equation}
\rho\rightarrow\frac{1}{\text{Tr}\left\{  C\rho C^{\dag}\right\}  }C\rho
C^{\dag},\label{eq:PCTC-transform}%
\end{equation}
where%
\[
C\equiv\text{Tr}_{\text{CTC}}\left\{  U\right\}  .
\]
There is always the possibility that the operator $C$ is equivalent to the
null operator, in which case Lloyd \textit{et al}.~suggest that
\textquotedblleft the evolution does not happen\textquotedblright%
\ \cite{LMGGS10}. This result is perhaps strange, suggesting that somehow the
system interacting with the P-CTC\ is annihilated. An explanation for what
could happen resorts to potential imperfections in the unitary interaction.
There is only a paradox for the evolution if the overlap of the CTC qubit with
the final projected state $\left\vert \Phi\right\rangle _{AB}$\ is identically
zero. In practice, evolutions do not occur with arbitrary precision, so that
the P-CTC-assisted circuit magnifies errors dramatically, and unlikely
influences outside the system of interest could intervene before the circuit
can create a paradox.\footnote{All of the results in this paper
should be understood as taking place in a ``hypothetical world,'' where the projection onto the
maximally entangled state $\vert \Phi \rangle$ occurs with certainty. We adopt the nomenclature
``postselection with certainty'' in order to make this point clear.}

P-CTCs allow us to postselect with certainty the outcomes of a generalized
measurement~\cite{Lloyd2010,LMGGS10}. Suppose the generalized measurement
consists of measurement operators $M_{0},...,M_{n-1}$. Suppose that we would
like to postselect the measurement in such a way so that outcome 0 definitely
occurs. We can perform the generalized measurement by appending an ancilla of
dimension at least $n$, in state $\left\vert 0\right\rangle $, to the system,
which we assume to be in a state $\left\vert \psi\right\rangle $. The initial
state is thus $\left\vert \psi\right\rangle \otimes\left\vert 0\right\rangle
$. We then perform a unitary $U_{1}$ that has the following effect:%
\[
U_{1}(\left\vert \psi\right\rangle \otimes\left\vert 0\right\rangle )=\sum
_{k}M_{k}\left\vert \psi\right\rangle \otimes\left\vert k\right\rangle .
\]
(This is the standard construction for a generalized
measurement~\cite{book2000mikeandike}.) Now we do a second unitary $U_{2}$,
from the ancilla to the P-CTC qubit. This unitary is as follows:%
\[
U_{2}=I\otimes\left\vert 0\right\rangle \left\langle 0\right\vert \otimes
I+I\otimes(I-\left\vert 0\right\rangle \left\langle 0\right\vert )\otimes X,
\]
where the third operator in the tensor product acts on the P-CTC qubit. This
construction makes every outcome except $M_{0}$ paradoxical, and measuring the
ancilla in the standard basis postselects so that the resulting state is%
\[
\frac{M_{0}\left\vert \psi\right\rangle }{\left\Vert M_{0}\left\vert
\psi\right\rangle \right\Vert _{2}}\otimes\left\vert 0\right\rangle .
\]
We can postselect on any subset of the measurement outcomes by varying the
projectors in $U_{2}$. For example, we can postselect by accepting any
measurement outcome except $M_{0}$. To do this, we would use the following
unitary as the last one:%
\[
U_{2}=I\otimes\left\vert 0\right\rangle \left\langle 0\right\vert \otimes
X+I\otimes(I-\left\vert 0\right\rangle \left\langle 0\right\vert )\otimes I.
\]

There is an important difference between D-CTCs and P-CTCs that follows
straightforwardly from their definitions. Recall that Deutsch's
self-consistency condition requires that the density matrix of the D-CTC
system after an interaction be equal to the density matrix before the
interaction \cite{PhysRevD.44.3197}. In this way, Deutsch designed D-CTCs
explicitly to replicate exactly the predictions of standard quantum mechanics
in the absence of CTCs. That is, before a CTC comes into existence, or after
it ends, quantum mechanics behaves exactly as usual.

P-CTCs, by contrast, act in a way equivalent to ``postselection with certainty,'' and they
specifically rule out evolutions that lead to a paradox. This implies that the
probabilities of measurement outcomes can be altered \textit{even in the
absence of CTCs}, if CTCs \textit{will} come into existence in the future. In
principle this means that the possibility of CTCs could be tested indirectly,
by looking for deviations from standard quantum probabilities, a fact that was
also pointed out by Hartle in Ref.~\cite{PhysRevD.49.6543}. Needless to say,
it is far from obvious how to do such a test in practice. The bizarre behavior
of nonlinear quantum mechanics would seem to cast doubt that CTCs can exist in
the real world.

We offer a simple example to illustrate the idea in the previous paragraph.
Suppose we have systems $A$, $B$, and $C$, where $C$ is a P-CTC qubit. We
prepare $A$ and $B$ in a maximally entangled state $(\left\vert
00\right\rangle ^{AB}+\left\vert 11\right\rangle ^{AB})/\sqrt{2}$ and measure
$A$ in the Pauli $Z$ basis. Then we perform a CNOT from qubit $B$ to qubit
$C$. This circuit leads to a paradox if the result of measuring $A$ is
$\left\vert 1\right\rangle $, so it must be $\left\vert 0\right\rangle $
(equivalently, one can check that the transformation induced by the P-CTC\ is
$I^{A}\otimes\left\vert 0\right\rangle \left\langle 0\right\vert ^{B}$). But
now consider what happens if we move the preparation and measurement of $AB$
before the P-CTC comes into existence. There are two possibilities:

\begin{enumerate}
\item The usual rules of quantum mechanics apply, and the probabilities of
$\left\vert 0\right\rangle $ and $\left\vert 1\right\rangle $ are equal. If
the result is $\left\vert 1\right\rangle $, we avoid a paradox by magnifying
tiny deviations from the exact unitaries, or other external effects to prevent
the CNOT from happening.

\item The certain postselection forces the measurement result on $A$ to be $\left\vert
0\right\rangle $, even though the P-CTC does not exist yet.
\end{enumerate}

Option 2 is perhaps more natural in this ideal noiseless setting, and it also
matches the qualitative results found by Hartle using path
integrals~\cite{PhysRevD.49.6543}. It is interesting that the system $A$\ does
not have to interact directly with the CTC\ in order for this effect to occur.

\section{Distinguishing Linearly-Independent States with P-CTCs}

\label{sec:P-CTC-distinguish}We begin this section by discussing some simple
examples, and we then prove a general theorem that states that a
P-CTC-assisted circuit can perfectly distinguish an arbitrary set of linearly
independent states and cannot do so if the states are linearly dependent. This
section ends with a discussion of how these circuits act on different
ontological representations of a quantum state.

\begin{figure}[ptb]
\begin{center}
\includegraphics[
natheight=1.159700in,
natwidth=2.572800in,
height=0.6114in,
width=1.5342in
]{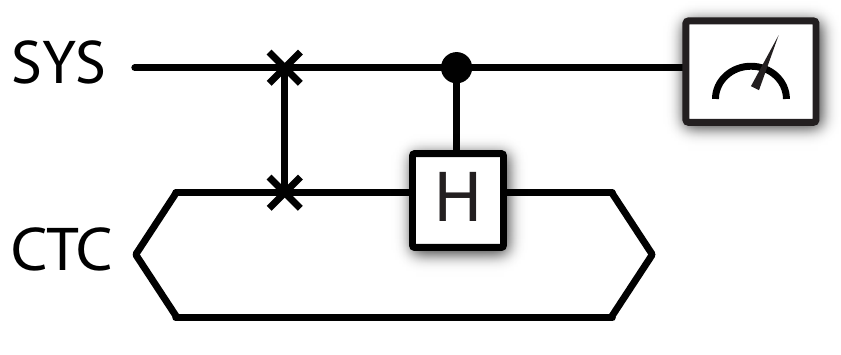}
\end{center}
\caption{A PCTC-assisted circuit that can distinguish when $\left\vert
1\right\rangle $ or $\left\vert +\right\rangle $ is input.}%
\label{fig:B92}%
\end{figure}

Our first circuit in Figure~\ref{fig:B92} distinguishes $\left\vert
1\right\rangle $ from $\left\vert +\right\rangle $ and can thus break the
security of the Bennett-92 protocol for quantum key distribution \cite{b92}.
The circuit consists of a cascade of a SWAP gate followed by a controlled
Hadamard, where%
\begin{align*}
\text{SWAP}  &  \equiv\left\vert 00\right\rangle \left\langle 00\right\vert
+\left\vert 01\right\rangle \left\langle 10\right\vert +\left\vert
10\right\rangle \left\langle 01\right\vert +\left\vert 11\right\rangle
\left\langle 11\right\vert ,\\
\text{C-}H  &  \equiv\left\vert 0\right\rangle \left\langle 0\right\vert
\otimes I+\left\vert 1\right\rangle \left\langle 1\right\vert \otimes H,
\end{align*}
so that%
\[
\left(  \text{C-}H\right)  \left(  \text{SWAP}\right)  =\left\vert
00\right\rangle \left\langle 00\right\vert +\left\vert 01\right\rangle
\left\langle 10\right\vert +\left\vert 1+\right\rangle \left\langle
01\right\vert +\left\vert 1-\right\rangle \left\langle 11\right\vert .
\]
The first qubit upon which the unitary acts is the system qubit, and the
second one is the CTC\ qubit. After tracing over the CTC\ system (as
prescribed in (\ref{eq:PCTC-transform})), we get the following transformation%
\[
C_{\text{B92}}\equiv\left\vert 0\right\rangle \left\langle 0\right\vert
+\left\vert 1\right\rangle \left\langle -\right\vert .
\]
This transformation then gives $\left\vert 0\right\rangle $ if we input
$\left\vert +\right\rangle $ and $\left\vert 1\right\rangle $ if we input
$\left\vert 1\right\rangle $ (after renormalization). Interestingly, this same
circuit distinguishes the antipodal states $\left\vert 0\right\rangle $ and
$\left\vert -\right\rangle $ when assisted by a
D-CTC~\cite{PhysRevLett.102.210402}.

We can generalize the above example to find a P-CTC-assisted circuit that can
distinguish two arbitrary non-orthogonal states. Without loss of generality,
suppose that the two states we are trying to distinguish are $\left\vert
1\right\rangle $ and $\left\vert \phi\right\rangle $ where $\left\vert
\left\langle 1|\phi\right\rangle \right\vert >0$. We would like to build the
following transformation:%
\[
C_{\phi}\equiv\left\vert 0\right\rangle \left\langle 0\right\vert +\left\vert
1\right\rangle \left\langle \phi^{\perp}\right\vert ,
\]
so that $C_{\phi}\left\vert 1\right\rangle =\left\vert 1\right\rangle $ and
$C_{\phi}\left\vert \phi\right\rangle =\left\vert 0\right\rangle $ (after
renormalization). We follow the same prescription as above and exploit the
following unitary $U$:%
\[
U\equiv\left\vert 0\right\rangle \left\langle \phi\right\vert +\left\vert
1\right\rangle \left\langle \phi^{\perp}\right\vert .
\]
We use a cascade of a SWAP\ and a controlled-$U$ where%
\[
\text{C-}U\equiv\left\vert 0\right\rangle \left\langle 0\right\vert \otimes
I+\left\vert 1\right\rangle \left\langle 1\right\vert \otimes U,
\]
so that the cascade $\left(  \text{C-}U\right)  \left(  \text{SWAP}\right)  $
is as follows:%
\[
=\left\vert 00\right\rangle \left\langle 00\right\vert +\left\vert
01\right\rangle \left\langle 10\right\vert +\left\langle \phi|0\right\rangle
\left\vert 10\right\rangle \left\langle 01\right\vert +\left\langle
\phi^{\perp}|0\right\rangle \left\vert 11\right\rangle \left\langle
01\right\vert +\left\langle \phi|1\right\rangle \left\vert 10\right\rangle
\left\langle 11\right\vert +\left\langle \phi^{\perp}|1\right\rangle
\left\vert 11\right\rangle \left\langle 11\right\vert .
\]
After tracing out the CTC\ system, we get%
\begin{align*}
\left\vert 0\right\rangle \left\langle 0\right\vert +\left\langle \phi^{\perp
}|0\right\rangle \left\vert 1\right\rangle \left\langle 0\right\vert
+\left\langle \phi^{\perp}|1\right\rangle \left\vert 1\right\rangle
\left\langle 1\right\vert  &  =\left\vert 0\right\rangle \left\langle
0\right\vert +\left\vert 1\right\rangle \left\langle \phi^{\perp
}|0\right\rangle \left\langle 0\right\vert +\left\vert 1\right\rangle
\left\langle \phi^{\perp}|1\right\rangle \left\langle 1\right\vert \\
&  =\left\vert 0\right\rangle \left\langle 0\right\vert +\left\vert
1\right\rangle \left\langle \phi^{\perp}\right\vert ,
\end{align*}
which is the desired transformation.

The D-CTC-assisted circuit presented in Ref.~\cite{PhysRevLett.102.210402}%
\ for distinguishing BB84 states is not able to distinguish these same states
when assisted by a P-CTC. In fact, the orthogonality relations of the BB84
states remain the same after going through the P-CTC-assisted circuit. The
transformation induced by the circuit in Ref.~\cite{PhysRevLett.102.210402} is
as follows under the P-CTC\ model:%
\[
\left\vert 00\right\rangle \left\langle 00\right\vert +\left\vert
01\right\rangle \left\langle 10\right\vert +\left\vert 10\right\rangle
\left\langle +0\right\vert +\left\vert 11\right\rangle \left\langle
-0\right\vert .
\]
It is perhaps striking that the transformation takes on this form, considering
that the transformation in Ref.~\cite{PhysRevLett.102.210402} takes
$\left\vert 00\right\rangle \rightarrow\left\vert 00\right\rangle $,
$\left\vert 10\right\rangle \rightarrow\left\vert 01\right\rangle $,
$\left\vert +0\right\rangle \rightarrow\left\vert 10\right\rangle $, and
$\left\vert -0\right\rangle \rightarrow\left\vert 11\right\rangle $. One can
check that the output states of the above transformation are as follows (after
renormalization):%
\begin{align*}
\left\vert 00\right\rangle  &  \rightarrow\frac{1}{\sqrt{2}}\left(  \left\vert
00\right\rangle +\left\vert 1+\right\rangle \right)  ,\\
\left\vert 10\right\rangle  &  \rightarrow\frac{1}{\sqrt{2}}\left(  \left\vert
01\right\rangle +\left\vert 1-\right\rangle \right)  ,\\
\left\vert +0\right\rangle  &  \rightarrow\frac{1}{\sqrt{2}}\left(  \left\vert
0+\right\rangle +\left\vert 10\right\rangle \right)  ,\\
\left\vert -0\right\rangle  &  \rightarrow\frac{1}{\sqrt{2}}\left(  \left\vert
0-\right\rangle +\left\vert 11\right\rangle \right)  .
\end{align*}
These states have the same orthogonality relations as the original input
states, and there is thus no improvement in distinguishability. This result
leads us to the main theorem of the next section.

\subsection{P-CTC-assisted circuits produce perfectly distinguishable outputs
for linearly independent input states}

We now state a general theorem regarding state distinguishability and P-CTCs
\footnote{Bennett and Smith both suggested in private
communication~\cite{B09,S10} that it holds true.}. One of our constructions in
the proof has similarities with the general construction in
Ref.~\cite{PhysRevLett.102.210402} for distinguishing an arbitrary set of
non-orthogonal states with a D-CTC-assisted circuit.

\begin{theorem}
\label{thm:P-CTC-state-distinguish}There exists a P-CTC-assisted circuit that
can perfectly distinguish an arbitrary set $\left\{  \left\vert \phi
_{i}\right\rangle \right\}  _{i=1}^{N}$ of linearly independent states, but a
P-CTC-assisted circuit cannot perfectly distinguish a set of linearly
dependent states.
\end{theorem}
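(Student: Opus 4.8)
The plan is to prove the two halves of the statement separately; the positive half is a direct construction, while the negative half is where the real content lies.

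\emph{Existence for linearly independent states.} Given linearly independent $|\phi_1\rangle,\dots,|\phi_N\rangle$, I would give two constructions paralleling the examples above. The first is similar to the general circuit of Ref.~\cite{PhysRevLett.102.210402}: let $\{|\mu_i\rangle\}$ be the dual set determined by $\langle\mu_i|\phi_j\rangle=\delta_{ij}$, which exists precisely because the states are linearly independent, and build the operator $C\equiv\sum_{i=1}^{N}|i\rangle\langle\mu_i|$, so that $C|\phi_i\rangle\propto|i\rangle$. The outputs $|i\rangle$ are mutually orthogonal, so a standard-basis measurement names the input with certainty; one realizes (a scalar multiple of) $C$ as $\mathrm{Tr}_{\mathrm{CTC}}\{U\}$ for an appropriate unitary $U$ by the same SWAP-and-controlled-unitary technique used above, generalized to higher dimension and, if needed, padded with an ancilla, the overall scale being immaterial because of the renormalization in~(\ref{eq:PCTC-transform}). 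The second construction uses just one P-CTC qubit: run the generalized measurement of Ref.~\cite{C98} that \emph{unambiguously} discriminates the linearly independent set---succeeding with nonzero probability and otherwise flagging ``inconclusive''---and then use a single P-CTC qubit to postselect with certainty on a conclusive outcome, exactly as in the postselected-measurement construction of Section~\ref{sec:review}. Since no conclusive outcome is ever wrong, the postselected outcome identifies the input with certainty.

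\emph{Nonexistence for linearly dependent states.} First I would reduce an arbitrary P-CTC-assisted circuit to a standard form: by the principle of deferred measurement, push all intermediate measurements to the end, so the circuit consists of ancilla preparations, a unitary, one or more postselected CTCs, and one terminal measurement. Composing the single-Kraus maps $\rho\mapsto C_\ell\rho C_\ell^{\dagger}$ contributed by the CTCs with the completely positive operations of adjoining ancillas, conjugating by unitaries, and projecting onto $|\Phi\rangle$, the pre-measurement state on input $|\phi\rangle$ is $\mathcal{E}(|\phi\rangle\langle\phi|)/\mathrm{Tr}\{\mathcal{E}(|\phi\rangle\langle\phi|)\}$ for a fixed completely positive map $\mathcal{E}(\rho)=\sum_{j}E_j\rho E_j^{\dagger}$ independent of the input. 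Two facts will be used: the circuit avoids a grandfather paradox on input $|\phi_i\rangle$ only if $E_j|\phi_i\rangle\neq 0$ for some $j$; and the outputs are perfectly distinguishable iff the operators $\mathcal{E}(|\phi_i\rangle\langle\phi_i|)$ have pairwise orthogonal support, i.e.\ $\langle\phi_i|E_j^{\dagger}E_l|\phi_m\rangle=0$ for all $j,l$ whenever $i\neq m$. Now suppose $\{|\phi_i\rangle\}$ is linearly dependent and write $|\phi_m\rangle=\sum_{i\neq m}c_i|\phi_i\rangle$ for some index $m$. Then for every Kraus index $j$,
\[
\big\|E_j|\phi_m\rangle\big\|^{2}=\langle\phi_m|E_j^{\dagger}E_j|\phi_m\rangle
=\sum_{i\neq m}\overline{c_i}\,\langle\phi_i|E_j^{\dagger}E_j|\phi_m\rangle=0 ,
\]
expanding the bra and then invoking the orthogonality relation with $l=j$ and $i\neq m$. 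Hence $E_j|\phi_m\rangle=0$ for all $j$, so $\mathcal{E}(|\phi_m\rangle\langle\phi_m|)=0$ and the denominator vanishes: the circuit actually produces a paradox on input $|\phi_m\rangle$, contradicting the assumption that it distinguishes the set. Thus linearly dependent states cannot be perfectly distinguished; and since distinguishing more states than the ambient dimension would force linear dependence, the Holevo bound survives for P-CTC-assisted circuits.

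\emph{Main obstacle.} The delicate step is the reduction to standard form in the negative part: one must argue that deferring a measurement past a postselected CTC does not change the circuit's behavior---the P-CTC renormalization is state-dependent, which is exactly the ``a future CTC can affect the present'' effect of Section~\ref{sec:review}---and that several P-CTCs together with adaptive classical control still collapse to a single completely positive numerator map followed by one renormalization. Once the circuit is in this form the algebra above is immediate, so I expect this bookkeeping about the general model, rather than any inequality, to be the crux of the argument.
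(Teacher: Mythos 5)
Your proposal is correct, and on the positive half it essentially coincides with the paper's proof: the paper builds the same operator $C=\sum_j\vert j\rangle\langle\widetilde{\phi}_j\vert$ from the dual (normal) vectors of the hyperplanes and implements it as $\mathrm{Tr}_{\mathrm{CTC}}\{U\}$ via a qudit SWAP followed by a controlled unitary $\sum_l\vert l\rangle\langle l\vert\otimes U_l$ with $\langle l\vert U_l=\langle\widetilde{\phi}_l\vert$ (your $\vert\mu_i\rangle$ are just these vectors up to per-row normalization, harmless after renormalization), and it gives your second construction verbatim: Chefles's unambiguous-discrimination POVM with a single P-CTC qubit postselecting away the inconclusive outcome. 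On the negative half you reach the same linearity-based conclusion by a somewhat different route. The paper works directly with the single induced operator $C=\mathrm{Tr}_{\mathrm{CTC}}\{U\}=\sum_j A_{j,j}$: assuming $C\vert\phi_0\rangle\propto\vert0\rangle$ and $C\vert\phi_1\rangle\propto\vert1\rangle$, it observes that a dependent state $\alpha\vert\phi_0\rangle+\beta\vert\phi_1\rangle$ is sent to $\alpha\mathcal{N}_0\vert0\rangle+\beta\mathcal{N}_1\vert1\rangle$, which no measurement can perfectly separate from both $\vert0\rangle$ and $\vert1\rangle$, and it asserts that the general case "generalizes easily." You instead assume the whole dependent set is perfectly distinguished and show the orthogonal-support constraints force $E_j\vert\phi_m\rangle=0$ for every Kraus operator, i.e.\ the circuit would create a paradox on one of its own inputs; this is a clean and complete way of handling the general $N$-state, multi-Kraus case (discarded ancillas included) that the paper only sketches. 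Conversely, the "main obstacle" you flag, deferring intermediate measurements past the P-CTC, is not needed for the theorem as the paper construes it: its model is a single unitary coupling the system to the CTC followed by a terminal measurement, so the induced map is the one-Kraus transformation of (\ref{eq:PCTC-transform}) and your algebra applies immediately with the single operator $C$. Your caution there is nonetheless well placed: per-state renormalization and coherent deferral of a measurement genuinely differ in this nonlinear theory, which is precisely the proper- versus improper-mixture ambiguity the paper discusses in its "linearity trap" section, so a fully general adversarial model with adaptive intermediate measurements would have to be treated branch by branch rather than collapsed into one completely positive map.
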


\begin{proof}
We present two constructions with the first requiring an $N$-dimensional P-CTC
system, while the second requires only one P-CTC\ qubit.

Our first
construction is similar to the construction in
Ref.~\cite{PhysRevLett.102.210402}\ that uses $N$ D-CTC\ qubits to distinguish
$N$ states. Consider a particular vector $\left\vert \phi_{j}\right\rangle $
in the set $\left\{  \left\vert \phi_{i}\right\rangle \right\}  _{i=1}^{N}$.
Arbitrary superpositions of all the other vectors besides this one outline a
hyperplane of dimension $N-1$ because these states form a linearly independent
set on their own, and let $j$ also refer to this hyperplane. We cannot write
the vector $\left\vert \phi_{j}\right\rangle $ as an arbitrary superposition
of the other states in the set because all the states in it are linearly
independent:%
\[
\left\vert \phi_{j}\right\rangle \neq\sum_{i\neq j}\alpha_{i}\left\vert
\phi_{i}\right\rangle .
\]
For each hyperplane $j$, there is a normal vector $|\widetilde{\phi}%
_{j}\rangle$ such that%
\[
\forall i\neq j:\langle\widetilde{\phi}_{j}|\phi_{i}\rangle=0.
\]
It follows that $\left\vert \langle\widetilde{\phi}_{j}|\phi_{j}%
\rangle\right\vert >0$---were it not so, then $\left\vert \phi_{j}%
\right\rangle $ would lie in hyperplane $j$, which contradicts the assumption
of linear independence.

We would like to have a circuit that implements the
following transformation:%
\[
C\equiv\sum_{j}\left\vert j\right\rangle \langle\widetilde{\phi}_{j}|.
\]
Such a transformation acts as follows (after renormalization) on any state
$\left\vert \phi_{j}\right\rangle $ in the linearly independent set:%
\[
C\left\vert \phi_{j}\right\rangle =\left\vert j\right\rangle .
\]
The output of this transformation is then distinguishable with a von Neumann
measurement.

We now explicitly construct a unitary that implements the above
transformation after tracing over the CTC\ system. It is a cascade of a qudit
SWAP\ gate and a particular controlled unitary gate (a generalization of our
examples from before). The qudit SWAP\ gate is as follows:%
\[
\sum_{j,k}\left\vert j\right\rangle \left\langle k\right\vert \otimes
\left\vert k\right\rangle \left\langle j\right\vert ,
\]
and the controlled unitary gate is%
\[
\sum_{l}\left\vert l\right\rangle \left\langle l\right\vert \otimes U_{l},
\]
where we choose each unitary $U_{l}$ above so that%
\[
\left\langle l\right\vert U_{l}=\langle\tilde{\phi}_{l}|,
\]
and its action on other basis states besides $\left\vert l\right\rangle $ is
not important. Then the cascade of these gates gives%
\begin{align*}
\left(  \sum_{l}\left\vert l\right\rangle \left\langle l\right\vert \otimes
U_{l}\right)  \left(  \sum_{j,k}\left\vert j\right\rangle \left\langle
k\right\vert \otimes\left\vert k\right\rangle \left\langle j\right\vert
\right)   &  =\sum_{j,k,l}\left\vert l\right\rangle \left\langle
l|j\right\rangle \left\langle k\right\vert \otimes U_{l}\left\vert
k\right\rangle \left\langle j\right\vert \\
&  =\sum_{j,k}\left\vert j\right\rangle \left\langle k\right\vert \otimes
U_{j}\left\vert k\right\rangle \left\langle j\right\vert .
\end{align*}
We finally trace out the CTC\ system to determine the actual transformation on
the chronology-respecting system:%
\begin{align*}
\sum_{j,k}\left\vert j\right\rangle \left\langle k\right\vert \left\langle
j\right\vert U_{j}\left\vert k\right\rangle  &  =\sum_{j,k}\left\vert
j\right\rangle \left\langle j\right\vert U_{j}\left\vert k\right\rangle
\left\langle k\right\vert \\
&  =\sum_{j}\left\vert j\right\rangle \left\langle j\right\vert U_{j}\sum
_{k}\left\vert k\right\rangle \left\langle k\right\vert \\
&  =\sum_{j}\left\vert j\right\rangle \left\langle j\right\vert U_{j}\\
&  =\sum_{j}\left\vert j\right\rangle \langle\widetilde{\phi}_{j}|.
\end{align*}
This last step proves that the construction gives the desired transformation.

There is another construction which can accomplish the same task with just one
P-CTC\ qubit. By choosing the POVM that distinguishes any set of linearly
independent states \cite{C98}, and ruling out result $M_{0}$ (which is
\textquotedblleft I do not know\textquotedblright), we can construct a
P-CTC-assisted circuit of the form in Section~\ref{sec:review}\ that can
distinguish any set of linearly independent states with just one P-CTC\ qubit.
This circuit performs the transformation $\left\vert \phi_{i}\right\rangle
\rightarrow\left\vert i\right\rangle $ so that the states at the output of the
circuit are perfectly distinguishable with a von Neumann measurement.

We now
prove the other part of Theorem~\ref{thm:P-CTC-state-distinguish}---that
linearly-dependent states are not perfectly distinguishable with a
P-CTC-assisted circuit. Consider an arbitrary unitary $U$ that acts on the
chronology-respecting system and the CTC\ system. We can decompose it as
follows:%
\[
U=\sum_{j,k}A_{j,k}\otimes\left\vert j\right\rangle \left\langle k\right\vert
,
\]
with respect to some basis $\left\{  \left\vert j\right\rangle \right\}
$\ for the CTC\ system. Tracing out the CTC\ system gives the transformation
that the P-CTC-assisted circuit induces%
\[
C=\text{Tr}_{\text{CTC}}\left\{  U\right\}  =\sum_{j}A_{j,j}.
\]
Now suppose that a P-CTC\ can distinguish a state $\left\vert \phi
_{0}\right\rangle $ from $\left\vert \phi_{1}\right\rangle $ in the sense that
$C\left\vert \phi_{0}\right\rangle =\left\vert 0\right\rangle $ and
$C\left\vert \phi_{1}\right\rangle =\left\vert 1\right\rangle $ after
renormalization. Then consider a linearly dependent state $\left\vert
\psi\right\rangle $ where we can write $\left\vert \psi\right\rangle
=\alpha\left\vert \phi_{0}\right\rangle +\beta\left\vert \phi_{1}\right\rangle
$ for some $\alpha,\beta\neq0$. Then, by linearity of the transformation $C$
before renormalization, it follows that%
\[
C\left\vert \psi\right\rangle =C\left(  \alpha\left\vert \phi_{0}\right\rangle
+\beta\left\vert \phi_{1}\right\rangle \right)  =\alpha C\left\vert \phi
_{0}\right\rangle +\beta C\left\vert \phi_{1}\right\rangle =\alpha
\mathcal{N}_{0}\left\vert 0\right\rangle +\beta\mathcal{N}_{1}\left\vert
1\right\rangle ,
\]
where $\mathcal{N}_{0}$ and $\mathcal{N}_{1}$ are non-zero normalization
constants. After renormalization, this state is not distinguishable from
$\left\vert 0\right\rangle $ or $\left\vert 1\right\rangle $ by any
measurement. This proof generalizes easily so that any P-CTC\ transformation
would not be able to distinguish a general set of linearly dependent states.
\end{proof}

As an afterthought, the above theorem demonstrates that the power of a
P-CTC-assisted circuit is rather limited in comparison to a D-CTC-assisted
one. A D-CTC-assisted circuit can violate the Holevo bound
\cite{PhysRevLett.102.210402}, but a P-CTC-assisted one can never violate it
because one can never have more than $N$ linearly independent states in $N$
dimensions. Of course, if the receiver has access to a P-CTC, that will raise
the classical capacity of certain channels, since it increases the ability to
distinguish states beyond that of ordinary quantum mechanics. The theorem also
implies that a P-CTC-assisted circuit cannot break the security of the BB84
\cite{bb84}\ or SARG04 \cite{sarg04}\ protocols for quantum key distribution,
though a P-CTC will increase the power of the eavesdropper to a certain
degree. These results might lend further credence to the belief that P-CTCs
are a more reasonable model of time travel because their information
processing abilities are not as striking as those of D-CTCs (even though they
still violate the uncertainty principle).

\subsection{The \textquotedblleft Linearity Trap\textquotedblright\ for
Labeled Mixtures}

The operation of a D-CTC-assisted circuit on a labeled mixture of states is
controversial \cite{PhysRevLett.103.170502,RM10,CM10} and can lead to
dramatically different conclusions depending on how one interprets such a
labeled mixture. A similar phenomenon happens with P-CTCs as we discuss below.

Let us consider a general ensemble $\left\{  \left(  p\left(  x\right)
,\left\vert \phi_{x}\right\rangle \right)  \right\}  $\ of non-orthogonal,
linearly independent states. In linear quantum mechanics, this ensemble has a
one-to-one correspondence with the following labeled mixture:%
\begin{equation}
\sum_{x}p\left(  x\right)  \left\vert x\right\rangle \left\langle x\right\vert
^{X}\otimes\left\vert \phi_{x}\right\rangle \left\langle \phi_{x}\right\vert
^{A}, \label{eq:labeled-mixture}%
\end{equation}
where the states $\{\left\vert x\right\rangle ^{X}\}$ are an orthonormal set.
Suppose the preparer holds on to the $X$ label and sends the $A$ system
through the transformation from Theorem~\ref{thm:P-CTC-state-distinguish}. A
first way to renormalize would be to act with the P-CTC\ transformation on
each state $\left\vert \phi_{x}\right\rangle $ in the ensemble and renormalize
each resulting state. This procedure assumes that the classical labeling
information is available, in principle. This process leads to the output
ensemble $\{(p\left(  x\right)  ,\left\vert x\right\rangle ^{A})\}$, which has
a one-to-one correspondence with the following labeled mixture:%
\begin{equation}
\sum_{x}p\left(  x\right)  \left\vert x\right\rangle \left\langle x\right\vert
^{X}\otimes\left\vert x\right\rangle \left\langle x\right\vert ^{A},
\label{eq:individual-renormalize}%
\end{equation}
so that the systems on $X$ and $A$ are now classically correlated according to
the distribution $p\left(  x\right)  $.

Another method for renormalization leads to a drastically different result.
Considering the labeled mixture as a \textquotedblleft true density
matrix\textquotedblright\ and acting on this state with the transformation
from Theorem~\ref{thm:P-CTC-state-distinguish}\ gives%
\begin{align*}
&  \sum_{j}\left\vert j\right\rangle \langle\tilde{\phi}_{j}|^{A}\left(
\sum_{x}p\left(  x\right)  \left\vert x\right\rangle \left\langle x\right\vert
^{X}\otimes\left\vert \phi_{x}\right\rangle \left\langle \phi_{x}\right\vert
^{A}\right)  \sum_{j^{\prime}}|\widetilde{\phi}_{j^{\prime}}\rangle\langle
j^{\prime}|^{A}\\
&  =\sum_{x}p\left(  x\right)  \left\vert x\right\rangle \left\langle
x\right\vert ^{X}\otimes\sum_{j,j^{\prime}}\left\vert j\right\rangle
\langle\widetilde{\phi}_{j}|\phi_{x}\rangle\langle\phi_{x}|\widetilde{\phi
}_{j^{\prime}}\rangle\langle j^{\prime}|^{A}\\
&  =\sum_{x}p\left(  x\right)  \left\vert x\right\rangle \left\langle
x\right\vert ^{X}\otimes\sum_{j}|\langle\widetilde{\phi}_{j}|\phi_{x}%
\rangle|^{2}\ \left\vert j\right\rangle \langle j|^{A}\\
&  =\sum_{x,j}p\left(  x\right)  |\langle\widetilde{\phi}_{j}|\phi_{x}%
\rangle|^{2}\left\vert x\right\rangle \left\langle x\right\vert ^{X}%
\otimes\left\vert j\right\rangle \langle j|^{A}\\
&  =\sum_{x}p\left(  x\right)  |\langle\tilde{\phi}_{x}|\phi_{x}\rangle
|^{2}\left\vert x\right\rangle \left\langle x\right\vert ^{X}\otimes\left\vert
x\right\rangle \left\langle x\right\vert ^{A}.%
\end{align*}
After renormalization, the state is as follows:%
\begin{equation}
\sum_{x}q\left(  x\right)  \left\vert x\right\rangle \left\langle x\right\vert
^{X}\otimes\left\vert x\right\rangle \left\langle x\right\vert ^{A},
\label{eq:strange-state}%
\end{equation}
where%
\[
q\left(  x\right)  \equiv\frac{1}{\sum_{x}p\left(  x\right)  |\langle
\tilde{\phi}_{x}|\phi_{x}\rangle|^{2}}p\left(  x\right)  |\langle\tilde{\phi
}_{x}|\phi_{x}\rangle|^{2}.
\]
The systems on $X$ and $A$ are classically correlated again, but the
distribution for the correlation can be drastically different if the overlap
$|\langle\tilde{\phi}_{x}|\phi_{x}\rangle|^{2}$ is not uniform over $x$. The
interpretation of the above result is bizarre:\ the P-CTC-assisted circuit
changes the original probabilities of the states in the mixture, in spite of
the fact that the preparer generated these probabilities well before the
P-CTC\ even came into existence.

Let us examine a third scenario. Consider the following purification of the
state in (\ref{eq:labeled-mixture}):%
\[
\sum_{x}\sqrt{p\left(  x\right)  }\left\vert x\right\rangle ^{X}\left\vert
x\right\rangle ^{X^{\prime}}\left\vert \phi_{x}\right\rangle ^{A}.
\]
Suppose that Alice sends the $A$ system through the P-CTC-assisted circuit.
Acting on this state with the transformation from
Theorem~\ref{thm:P-CTC-state-distinguish} gives%
\begin{align*}
\sum_{j}\left\vert j\right\rangle \langle\tilde{\phi}_{j}|^{A}\sum_{x}%
\sqrt{p\left(  x\right)  }\left\vert x\right\rangle ^{X}\left\vert
x\right\rangle ^{X^{\prime}}\left\vert \phi_{x}\right\rangle ^{A}  &
=\sum_{x,j}\sqrt{p\left(  x\right)  }\left\vert x\right\rangle ^{X}\left\vert
x\right\rangle ^{X^{\prime}}\left\vert j\right\rangle ^{A}\langle\tilde{\phi
}_{j}|\phi_{x}\rangle^{A}\\
&  =\sum_{x}\sqrt{p\left(  x\right)  }\langle\tilde{\phi}_{x}|\phi_{x}%
\rangle^{A}\left\vert x\right\rangle ^{X}\left\vert x\right\rangle
^{X^{\prime}}\left\vert x\right\rangle ^{A}.
\end{align*}
Renormalizing the last line above leads to the following state:%
\begin{equation}
\frac{1}{\mathcal{N}}\sum_{x}\sqrt{p\left(  x\right)  }\langle\tilde{\phi}%
_{x}|\phi_{x}\rangle^{A}\left\vert x\right\rangle ^{X}\left\vert
x\right\rangle ^{X^{\prime}}\left\vert x\right\rangle ^{A},
\label{eq:purified-calculation}%
\end{equation}
where%
\[
\mathcal{N}\equiv\sqrt{\sum_{x}p\left(  x\right)  |\langle\tilde{\phi}%
_{x}|\phi_{x}\rangle|^{2}}.
\]
The coefficients $\langle\tilde{\phi}_{x}|\phi_{x}\rangle^{A}$ can generally
be complex, and this state is different from the other outcomes illustrated
above because there is quantum interference. Though, this state is a
purification of the state in (\ref{eq:strange-state}), so that the resulting
state is the same as in (\ref{eq:strange-state}) if we discard the system
$X^{\prime}$.

What should we make of these differing results? In standard quantum mechanics,
there are three concepts that are indistinguishable from each other:

\begin{enumerate}
\item An ensemble $\{p\left(  x\right)  ,\left\vert \phi_{x}\right\rangle
\left\langle \phi_{x}\right\vert \}$ of pure states (classical ignorance);

\item An entangled state $\left\vert \psi\right\rangle \left\langle
\psi\right\vert ^{AB}$ where subsystem $B$ is assumed to be inaccessible;

\item A density matrix $\rho$.
\end{enumerate}

The first is what d'Espagnat called a \textquotedblleft proper
mixture,\textquotedblright\ and the second is what he called an
\textquotedblleft improper mixture\textquotedblright\ \cite{BDE06}. The
density matrix is a mathematical object introduced to summarize the observable
consequences of either of the other two. One could imagine such a thing as a
\textquotedblleft true density matrix\textquotedblright---an intrinsically
mixed state that does not represent either classical ignorance or
entanglement---though it is not clear what such an object would mean,
physically. However, some researchers suggest that density matrices, rather
than pure states, should be the fundamental objects in quantum theory. For
instance, mixed states arise naturally in Deutsch's approach to
CTCs~\cite{PhysRevD.44.3197}.

These three different ontological representations of a quantum state are all
indistinguishable in standard quantum mechanics because it is a linear theory.
But in a nonlinear version of quantum mechanics (as we get using either D-CTCs
or P-CTCs) we have no reason to expect them to behave the same, and they do
not. This is the major criticism that Ref.~\cite{CM10}\ levies against
Ref.~\cite{PhysRevLett.103.170502}. Bennett \textit{et al}.~use
\textquotedblleft labeled mixtures\textquotedblright\ to represent classical
ensembles, which is not necessarily justifiable in nonlinear quantum mechanics.

So what should be done? If the \textquotedblleft labeled
mixture\textquotedblright\ represents classical ignorance about how the system
was prepared, that implies that, in principle, information exists which would
specify a pure state. One should therefore apply the calculation separately to
each state $\left\vert \phi_{x}\right\rangle $ in the ensemble, and then
combine them in a new ensemble. If the output state is random (e.g., the
result of a measurement), one gets the probabilities of the new ensemble using the
Bayes rule. This is the first approach in (\ref{eq:individual-renormalize}).

If the mixture is really part of an entangled state, one should apply the
calculation to the purification of the state and then trace out the
inaccessible subsystem as in (\ref{eq:purified-calculation}). This procedure
will, in general, give a different answer, as (\ref{eq:purified-calculation}) demonstrates.

Finally, if there are such objects as \textquotedblleft true density
matrices,\textquotedblright\ one can calculate with them directly. This is
what we do in (\ref{eq:strange-state}), and it gives the same answer as
tracing out the reference system $X^{\prime}$\ of the purified state in
(\ref{eq:purified-calculation}). Also, Bennett \textit{et al}.~assume that a
labeled mixture of states is a \textquotedblleft true density
matrix,\textquotedblright\ and this assumption is what leads them to conclude
that D-CTCs are \textquotedblleft impotent\textquotedblright\ in
Refs.~\cite{PhysRevLett.103.170502,BLSS10}.

In summary, the first ontological representation of a quantum state as a
proper mixture leads to a dramatically different conclusion for P-CTC-assisted
circuits than the second and third ontological representations (which both
lead to the same conclusion).\ We also note that it is the same with
D-CTC-assisted circuits (the first representation leads to differing
conclusions than the second/third), but the states output by a D-CTC-assisted
circuit are different from those output by a P-CTC-assisted one.

\section{P-CTCs can help solve hard problems}

\label{sec:P-CTC-compute}Lloyd \textit{et al}.~prove that the computational
power of quantum computers and P-CTCs is equivalent to PP (\textquotedblleft
Probabilistic Polynomial time\textquotedblright)~\cite{Lloyd2010}, whereas the
computational power of Deutsch's CTCs \cite{PhysRevD.44.3197}\ is
PSPACE~\cite{ScottAaronson02082009}. The proof is simple---since we can
simulate any postselected measurement with P-CTCs, and can simulate P-CTCs
with postselected measurements, the two paradigms have equivalent
computational power. Since Aaronson proved that quantum mechanics with
postselection has computational power PP~\cite{A05}, PCTCs indeed have
computational power equivalent to that of PP. PP is a rather powerful
computational class, including (for example) all problems in NP. It is known
to be contained in PSPACE, however, and is generally believed to be less powerful.

It is instructive to outline explicit P-CTC-assisted circuits that illustrate
the power of P-CTCs. In the next few sections, we give explicit P-CTC-assisted
circuits that can factor efficiently, can solve any problem in the
intersection of NP\ and co-NP, and can probabilistically solve any problem in
NP. All of these circuits use just one P-CTC qubit. The structure of these
circuits draws on ideas from the algorithms in Ref.~\cite{fpl2003brun}, and
are closely related to the construction of Aaronson~\cite{A05} in his proof
that NP $\subseteq$ PostBQP.

\subsection{FACTORING}

\begin{figure}[ptb]
\begin{center}
\includegraphics[
natheight=2.786400in,
natwidth=5.453500in,
height=1.567in,
width=3.039in
]{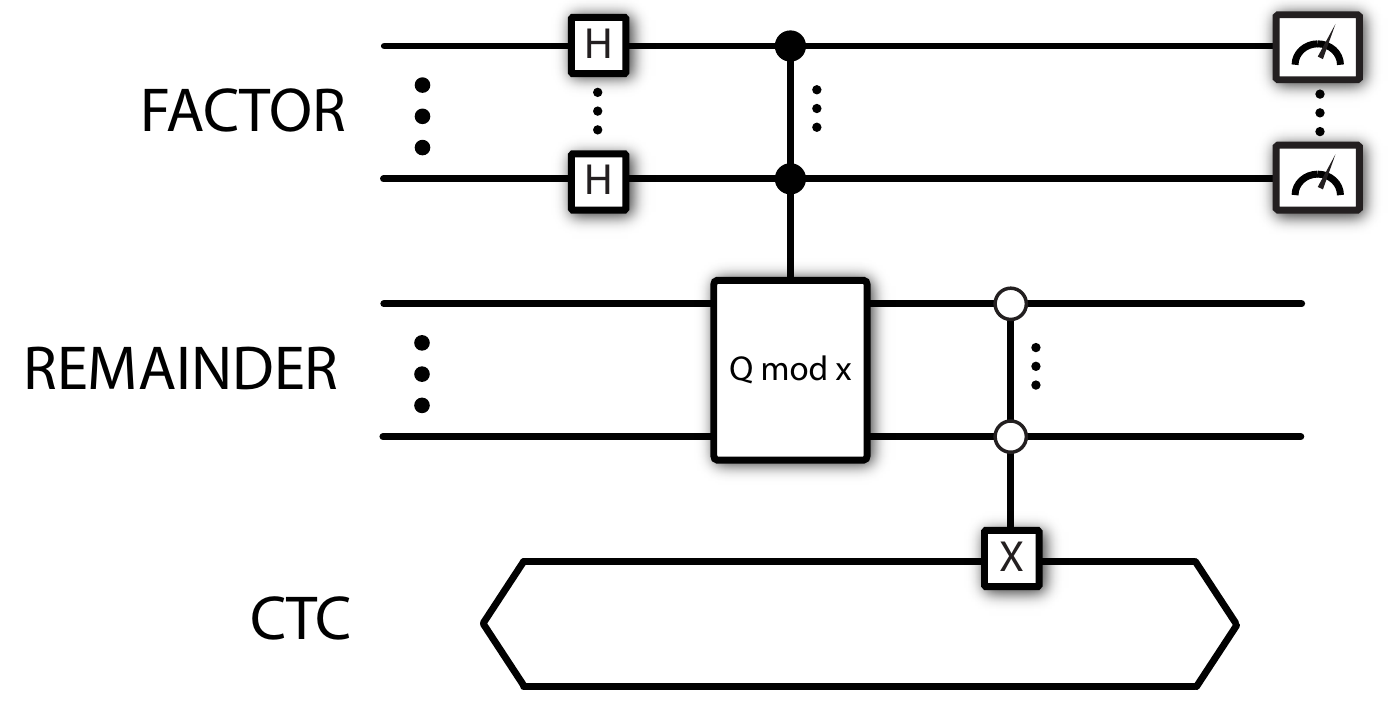}
\end{center}
\caption{A PCTC-assisted circuit that can factor an integer efficiently. The
circuit exploits the ability of a P-CTC\ to eliminate incorrect solutions by
making them paradoxical.}%
\label{fig:factoring}%
\end{figure}

Let $Q$ be the number to factor and suppose that it is $N$ bits long (so that
$N\approx\log(Q)$). The P-CTC-assisted circuit consists of a CTC\ qubit and
two $N$-qubit registers:\ a {\small REMAINDER} register and a {\small FACTOR}
register. The steps of the circuit are as follows (depicted in
Figure~\ref{fig:factoring}):

\begin{enumerate}
\item Initialize the {\small FACTOR} and {\small REMAINDER}\ registers to
$\left\vert 0\right\rangle $. Apply Hadamard gates to all the qubits in the
{\small FACTOR} register. This first set of Hadamards is equivalent to the
following unitary:%
\[
U_{1}\equiv\left(  H^{\otimes N}\right)  _{\text{F}},
\]
where \textquotedblleft F\textquotedblright\ denotes the {\small FACTOR} register.

\item Act with a controlled unitary that calculates the modulo operation on
the {\small FACTOR} register and places it in the {\small REMAINDER} register:%
\[
U_{2}\equiv\sum_{j=0}^{2^{N}-1}\left(  U_{j}\right)  _{\text{R}}%
\otimes\left\vert j\right\rangle \left\langle j\right\vert _{\text{F}}.
\]
In the above, \textquotedblleft R\textquotedblright\ indicates the
{\small REMAINDER}\ register, and $U_{j}$ is some unitary chosen so that%
\[
U_{j}\left\vert 0\right\rangle =\left\{
\begin{array}
[c]{cc}%
\left\vert 1\right\rangle  & \text{if\ }j\in\left\{  0,Q\right\} \\
\left\vert Q\operatorname{mod}j\right\rangle  & \text{else}%
\end{array}
\right.  .
\]
If $j$ divides $Q$ and $j$ is not equal to $0$ or $Q$, then the
{\small REMAINDER} register contains $0$. Otherwise, it contains a nonzero
number, the remainder of $Q/j$.

\item Apply the following controlled unitary from the {\small REMAINDER}
register to the CTC register:%
\[
U_{3}\equiv\left\vert 0\right\rangle \left\langle 0\right\vert _{\text{R}%
}\otimes I_{\text{F}}\otimes I_{\text{CTC}}+(I-\left\vert 0\right\rangle
\left\langle 0\right\vert )_{\text{R}}\otimes I_{\text{F}}\otimes
X_{\text{CTC}}.
\]

\item Measure the {\small FACTOR}\ register, and let the CTC qubits continue
through the CTC.
\end{enumerate}

We can verify that the P-CTC-assisted circuit is behaving as it should by
considering the induced transformation (as in (\ref{eq:PCTC-transform})). The
cascade of $U_{1}$, $U_{2}$, and $U_{3}$ is the following unitary:%
\[
\sum_{j=0}^{2^{N}-1}\left[  \left\vert 0\right\rangle \left\langle
0\right\vert U_{j}\right]  _{\text{R}}\otimes\left[  \left\vert j\right\rangle
\left\langle j\right\vert H^{\otimes N}\right]  _{\text{F}}\otimes
I_{\text{CTC}}+\sum_{j=0}^{2^{N}-1}\left[  \left(  I-\left\vert 0\right\rangle
\left\langle 0\right\vert \right)  U_{j}\right]  _{\text{R}}\otimes\left[
\left\vert j\right\rangle \left\langle j\right\vert H^{\otimes N}\right]
_{\text{F}}\otimes X_{\text{CTC}}.
\]
Tracing over the CTC\ qubit gives the induced transformation:%
\[
\sum_{j=0}^{2^{N}-1}\left[  \left\vert 0\right\rangle \left\langle
0\right\vert U_{j}\right]  _{\text{R}}\otimes\left[  \left\vert j\right\rangle
\left\langle j\right\vert H^{\otimes N}\right]  _{\text{F}}.
\]
Applying this transformation to a {\small FACTOR} and {\small REMAINDER}%
\ register both initialized to $\left\vert 0\right\rangle $ gives the
following state:%
\[
\left\vert 0\right\rangle _{\text{R}}\sum_{j\ :\ Q\operatorname{mod}j=0\wedge
j\neq Q\wedge j\neq0}\left\vert j\right\rangle _{\text{F}}\text{,}%
\]
where we see that the effect of the last controlled gate in
Figure~\ref{fig:B92}\ is to eliminate all of the invalid answers or the ones
for which $j\in\left\{  0,Q\right\}  $\ by making these possibilities
paradoxical. Measuring the {\small FACTOR}\ register then returns a factor of
$Q$. The algorithm fails in the case where $Q$ is prime. But since primality
can be checked efficiently, we just assume that we only use the algorithm with
a composite (non-prime) $Q$.

If the CTC qubits produce a number $\left\vert j\right\rangle $ that is not a
factor of $Q$, then a NOT is applied to the CTC qubit. This would produce a
paradox, which is forbidden for P-CTCs. Because the initial state coming out
of the CTC contains components including all numbers $\left\vert
j\right\rangle $, those components that are factors of $Q$ have their
probabilities magnified, and all others have their probabilities suppressed.
The circuit uses the grandfather paradox such that the only histories that
return a factor are self-consistent. This idea is the same as that in
Ref.~\cite{fpl2003brun} and also exploited by Aaronson to illustrate the power
of postselected quantum computation~\cite{A05}.

\subsection{Decision Problems in the Intersection of NP and co-NP}

\begin{figure}[ptb]
\begin{center}
\includegraphics[
natheight=1.987300in,
natwidth=5.927400in,
height=1.1718in,
width=3.4411in
]
{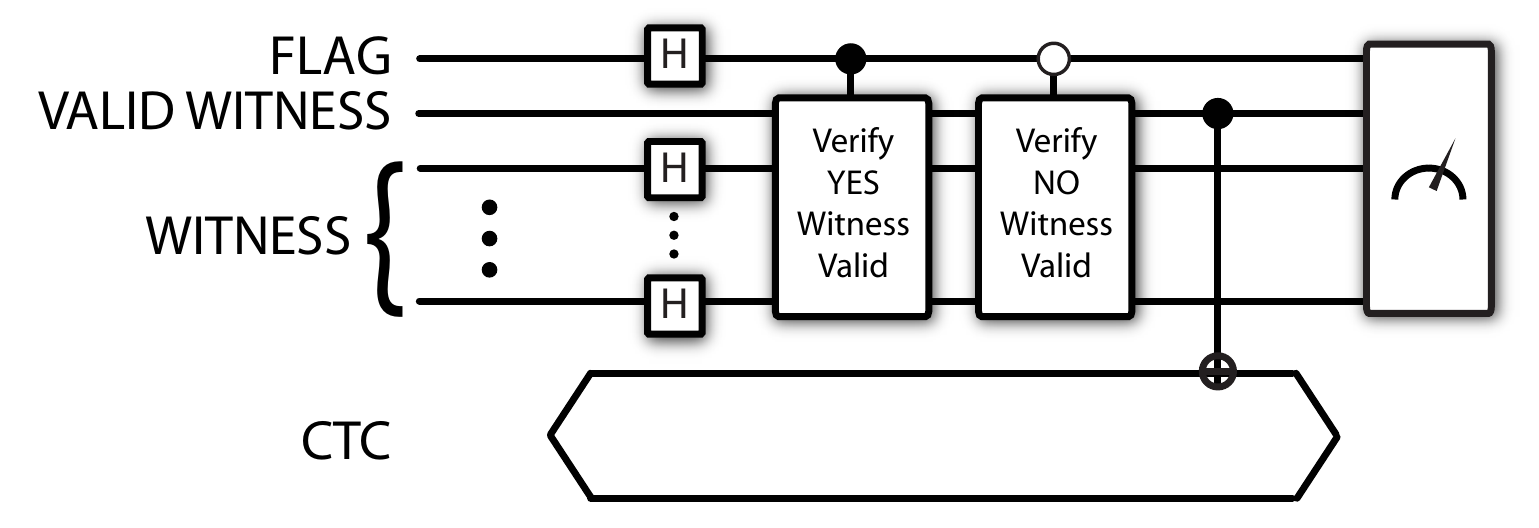}
\end{center}
\caption{A PCTC-assisted circuit for determining a valid witness for a
decision problem in the intersection of NP and co-NP.}%
\label{fig:NP-and-co-NP}%
\end{figure}

A decision problem lying in the intersection of NP and co-NP is one for which
there is a short witness for both a YES\ answer and a NO answer. We can solve
any decision problem in this complexity class with a P-CTC-assisted circuit.
The idea here is essentially the same as in the P-CTC-assisted factoring
algorithm, only there are now two parts. Suppose that for a particular
decision problem both witnesses can be represented using no more than $N$
bits. The P-CTC-assisted circuit consists of four quantum registers: a
{\small FLAG}\ qubit, a {\small VALID\ WITNESS}\ qubit, a {\small WITNESS}%
\ register with $N$ qubits, and a CTC\ qubit. It operates as follows (depicted
in Figure~\ref{fig:NP-and-co-NP}):

\begin{enumerate}
\item Initialize the {\small FLAG}\ qubit, the {\small VALID\ WITNESS}\ qubit,
and the {\small WITNESS}\ register to $\left\vert 0\right\rangle $. Apply
Hadamard gates to the {\small FLAG}\ qubit and to the $N$ qubits in the
{\small WITNESS} register. The flag qubit being equal to 1 means that the
answer is YES. The {\small FLAG}\ qubit being equal to 0 means the answer is NO.

\item Conditioned on the {\small FLAG}\ qubit being equal to 1, the answer is
(claimed to be) YES and the remaining $N$ qubits hold a witness $\left\vert
y\right\rangle $. The {\small FLAG} qubit acts as a control bit. If
{\small FLAG} = 1, then pass the $N$ qubits of the witness, plus the
{\small VALID\ WITNESS}\ qubit in the state $\left\vert 0\right\rangle
_{\text{V}}$, through a circuit that verifies whether the witness is valid:%
\[
\left\vert 0\right\rangle _{\text{V}}\left\vert y\right\rangle _{\text{W}%
}\rightarrow\left\vert j\right\rangle _{\text{V}}\left\vert y\right\rangle
_{\text{W}},
\]
where \textquotedblleft V\textquotedblright\ denotes the
{\small VALID\ WITNESS}\ qubit, \textquotedblleft W\textquotedblright\ denotes
the {\small WITNESS}\ register, $j=0$ if the witness is valid, and $j=1$ otherwise.

\item Conditioned on the {\small FLAG}\ qubit being equal to 0, the answer is
(claimed to be) NO and the remaining $N$ qubits hold a witness $\left\vert
n\right\rangle $. The {\small FLAG} qubit again acts as a control bit. If
{\small FLAG} = 0, then pass the $N$ qubits of the witness, plus the
{\small VALID\ WITNESS}\ qubit in the state $\left\vert 0\right\rangle
_{\text{V}}$, through a circuit that verifies whether the witness is valid:%
\[
\left\vert 0\right\rangle _{\text{V}}\left\vert n\right\rangle _{\text{W}%
}\rightarrow\left\vert j\right\rangle _{\text{V}}\left\vert n\right\rangle
_{\text{W}},
\]
where $j=0$ if the witness is valid, and $j=1$ otherwise.

\item Apply a CNOT from the {\small VALID\ WITNESS}\ qubit holding $\left\vert
j\right\rangle $ to the CTC qubit.

\item Measure the {\small FLAG}\ qubit, the {\small VALID\ WITNESS}\ qubit,
and the {\small WITNESS}\ register. The measurement results give an answer to
the decision problem (in the {\small FLAG}\ qubit and in the
{\small VALID\ WITNESS}\ qubit) plus the witness.
\end{enumerate}

The reasoning that this algorithm works is essentially the same as that for
factoring. The last CNOT\ gate makes a paradox out of any scenario in which
the {\small VALID\ WITNESS}\ qubit is equal to one, thus eliminating the
possibilities for which the witness is invalid.

\subsection{SAT}

A satisfiability (SAT) decision problem tries to determine if there exists a
satisfying solution for a Boolean formula (one which makes the formula
evaluate to TRUE). We now show that we can probabilistically solve a
SAT\ decision problem, which implies that we can probabilistically solve any
problem in NP because SAT is NP-complete~\cite{C71}. Suppose that we want to
solve SAT on $N$ bits. There is a Boolean function $f\left(  x_{1}%
,\ldots,x_{N}\right)  $ defined by a formula that can be evaluated
efficiently. We want to know if there are values of $x_{1},\ldots,x_{N}$ that
make $f\left(  x_{1},\ldots,x_{N}\right)  =1$, and if so, we would like to
have a satisfying assignment. (The latter is not necessary for a decision
problem, of course, but we get it for free.)

The P-CTC-assisted circuit acts on four different quantum registers:\ a
{\small FLAG} qubit, a {\small VALID\ WITNESS}\ qubit, an $N$-qubit
{\small WITNESS}\ register, and a CTC qubit. If the {\small FLAG} qubit is
equal to 1, the function has a satisfying assignment, and if it is 0, it does
not.\ The circuit has the following steps:

\begin{enumerate}
\item Initialize the {\small FLAG}\ qubit, the {\small VALID\ WITNESS}\ qubit,
and the {\small WITNESS}\ register to $\left\vert 0\right\rangle $. Apply
Hadamard gates to the {\small FLAG} qubit and the $N$-qubit {\small WITNESS}\ register.

\item Conditioned on the {\small FLAG} qubit being equal to 1, the answer is
(claimed to be) YES, and the {\small WITNESS}\ register holds a satisfying
assignment $\left\vert x_{1},\ldots,x_{N}\right\rangle _{\text{W}}$. The
{\small FLAG} qubit acts as a control bit. If {\small FLAG} = 1, then pass the
$N$-qubit {\small WITNESS}\ register, plus the {\small VALID\ WITNESS}\ qubit
in the state $\left\vert 0\right\rangle _{\text{V}}$, through a circuit that
calculates%
\[
\left\vert x_{1},\ldots,x_{N}\right\rangle _{\text{W}}\left\vert
0\right\rangle _{\text{V}}\rightarrow\left\vert x_{1},\ldots,x_{N}%
\right\rangle _{\text{W}}\left\vert j\right\rangle _{\text{V}},
\]
where $j=\lnot f\left(  x_{1},\ldots,x_{N}\right)  $. (That is, $j=0$ if
$x_{1},\ldots,x_{N}$ is satisfying, and $j=1$ if not.)

\item Conditioned on the {\small FLAG} qubit being equal to 0, the answer is
(claimed to be) NO, and we require that the $N$-qubit {\small WITNESS}%
\ register hold all zeros: $\left\vert 00\ldots0\right\rangle $. Apply the
following controlled-unitary:%
\[
I_{\text{F}}\otimes\left\vert 0\ldots0\right\rangle \left\langle
0\ldots0\right\vert _{\text{W}}\otimes I_{\text{V}}+\left\vert 1\right\rangle
\left\langle 1\right\vert _{\text{F}}\otimes\left(  I-\left\vert
0\ldots0\right\rangle \left\langle 0\ldots0\right\vert \right)  _{\text{W}%
}\otimes I_{\text{V}}+\left\vert 0\right\rangle \left\langle 0\right\vert
_{\text{F}}\otimes\left(  I-\left\vert 0\ldots0\right\rangle \left\langle
0\ldots0\right\vert \right)  _{\text{W}}\otimes X_{\text{V}}.
\]
The {\small VALID\ WITNESS}\ qubit now holds $\left\vert 1\right\rangle
_{\text{V}}$ if the qubits in the $N$-qubit {\small WITNESS}\ register are not
all zeros, and $\left\vert 0\right\rangle _{\text{V}}$ if they are.

\item Apply a CNOT from the {\small VALID\ WITNESS}\ qubit $\left\vert
j\right\rangle _{\text{V}}$ to the CTC qubit.

\item Measure all the ancillas.
\end{enumerate}

There are two cases:

\begin{enumerate}
\item If the function has no satisfying assignment, then the only
non-paradoxical output is all zeros (including the flag bit): $\left\vert
0\right\rangle _{\text{F}}\left\vert 0\right\rangle _{\text{V}}\left\vert
00\ldots0\right\rangle _{\text{W}}$. This outcome occurs with probability one
in this case.

\item If the function has $m$ satisfying assignments, then there are $m+1$
non-paradoxical results: the $m$ satisfying assignments, plus the all zero
state $\left\vert 0\right\rangle _{\text{F}}\left\vert 0\right\rangle
_{\text{V}}\left\vert 00\ldots0\right\rangle _{\text{W}}$. These $m+1$ results
occur with equal probability.
\end{enumerate}

So in case 1, the correct answer (NO) always occurs, and in case 2, a
satisfying assignment (YES) occurs with probability%
\[
\frac{m}{m+1}\geq1/2,
\]
and the false answer (NO) occurs with probability%
\[
\frac{1}{m+1}\leq1/2.
\]

To improve the probabilities, we can replicate some of these steps while still
using just one CTC\ qubit. Replicate steps 1-3 $k$ times on $k$ copies of all
of the above registers (except for the CTC\ qubit). So we now get $k$
different flag bits and (potentially) satisfying assignments. We then do the
following unitary from the $k$ {\small VALID\ WITNESS} qubits to the P-CTC qubit:%

\[
\left\vert 00\ldots0\right\rangle \left\langle 00\ldots0\right\vert \otimes
I+(I-\left\vert 00\ldots0\right\rangle \left\langle 00\ldots0\right\vert
)\otimes X.
\]
In case 1, we get the result $\left\vert 0\right\rangle _{\text{F}}\left\vert
0\right\rangle _{\text{V}}\left\vert 00\ldots0\right\rangle _{\text{W}}$ every
time. In case 2, we get the wrong answer $\left\vert 0\right\rangle
_{\text{F}}\left\vert 0\right\rangle _{\text{V}}\left\vert 00\ldots
0\right\rangle _{\text{W}}$ every time only with probability:%
\[
\frac{1}{(m+1)^{k}}\leq\frac{1}{2^{k}}.
\]
We can make this failure probability as small as we like with only logarithmic overhead.

\section{Conclusion}

Prior research has shown that closed timelike curves operating according to
Deutsch's model can have dramatic consequences for computation and information
processing \cite{ScottAaronson02082009,PhysRevLett.102.210402}\ if one
operates on \textquotedblleft proper\textquotedblright\ mixtures of quantum
states \cite{BDE06}. Lloyd \textit{et al}.~then showed that postselected
closed timelike curves have computational power equivalent to the complexity
class PP~\cite{Lloyd2010,LMGGS10}, by exploiting a result of Aaronson on
postselected quantum computation \cite{A05}.

In this paper, we showed how to implement any postselected operation with certainty with just
one P-CTC\ qubit, and we discussed an important difference between D-CTCs and
P-CTCs in which the future existence of a P-CTC\ could affect the
probabilistic results of a present experiment by creating a paradox for
particular outcomes. Theorem~\ref{thm:P-CTC-state-distinguish} then proves
that P-CTCs can help distinguish an arbitrary set
of linearly independent states, but they are of no use for helping to
distinguish linearly dependent states. We also discussed how three different
ontological descriptions of a state (equivalent in standard linear quantum
mechanics) do not necessarily lead to the same consequences in the nonlinear
theory of postselected closed timelike curves. Finally, we provided explicit
P-CTC-assisted circuits that efficiently factor an integer, solve any decision
problem in the intersection of NP and co-NP, and probabilistically solve any
decision problem in NP (all using just one P-CTC\ qubit).

\textit{Acknowledgements}---We acknowledge useful conversations with Charles
H.~Bennett, Hilary Carteret, Patrick Hayden, Debbie Leung, and Graeme Smith.
We also acknowledge the anonymous referees for helpful comments.
TAB\ acknowledges the support of the U.S. National Science Foundation under
Grant No.~CCF-0448658. MMW\ acknowledges the support of the MDEIE (Qu\'{e}bec)
PSR-SIIRI international collaboration grant.

\end{document}